\title{ {\bf Prime Clocks} }
\author{Michael Stephen Fiske}
\date{September 30, 2019}
\chardef\bslash=`\\ 
\newtheorem{thm}{Theorem}[section]
\newtheorem{lem}[thm]{Lemma}
\theoremstyle{definition}
\newtheorem{defn}{Definition}[section]
\theoremstyle{remark}
\newtheorem{rem}{Remark}[section]
\newtheorem{example}{Example}
\newtheorem{algorithm}{Algorithm}
\newcommand{\eval}[2][\right]{\relax
  \ifx#1\right\relax \left.\fi#2#1\rvert}
\begin{document}

\maketitle




\begin{abstract}
    Physical implementations of digital computers began 
    in the latter half of the 1930's and 
    were first constructed from various forms of logic gates.  
    Based on the prime numbers, we introduce prime clocks and prime clock sums,
    where the clocks utilize time and act as computational primitives instead of gates.   
    The prime clocks generate an infinite abelian group, where for each $n$,
    there is a finite subgroup $S$ such that for each  
    Boolean function $f:\{0, 1\}^n \rightarrow \{0, 1\}$, there exists a finite prime clock sum in 
    $S$ that can represent and compute $f$. A parallelizable algorithm, implemented with a finite prime clock sum, 
    is provided that computes $f$.     
    In contrast, the negation $\neg$, conjunction $\wedge$, and disjunction $\vee$  operations  
    generate a Boolean algebra.  In terms of computation, Boolean circuits 
    computed with logic gates \verb|NOT|, \verb|AND|,  \verb|OR| have a depth.
    This means that a completely parallel computation of Boolean functions 
    is not possible with these gates.  Overall, some new connections between number theory,  
    Boolean functions and computation are established. 
\end{abstract}



\section{Introduction}\label{msf:sect_intro}


\subsection{Notation and Preliminaries}\label{msf:sect_preliminaries}

Symbol $\mathbb{Z}$ denotes the integers and $\mathbb{N}$ the non-negative integers.  For any 
$n \in  \mathbb{N}$ such that $n \ge 2$  and $a \in \mathbb{N}$ such that $0 \le a \le n-1$, 
consider the equivalence class $[a] = \{a + kn:  k \in \mathbb{Z}  \}$ that 
is a subset of $\mathbb{Z}$.  Let $\mathbb{Z}_n = \{[0], [1], \dots,$ $[n-1] \}$.
$a \bmod n$ is the remainder when 
$a$ is divided by $n$.  In the standard manner, $(\mathbb{Z}_n, +_n)$ is an abelian group, 
where binary operator $+_n$
is defined as $[a] +_n [b]$ $= \big{[}$$(a + b) \bmod n$$\big{]}$.   
The brackets are sometimes omitted and 
$[a] \in \mathbb{Z}_n$ is represented with the integer $a$, satisfying $0 \le a \le n-1$. 
The set of all functions  $f: \mathbb{N} \rightarrow \mathbb{Z}_n$ 
is denoted as  ${\mathbb{Z}_n}^{\mathbb{N}}$.  Symbol 
$\overline{c}$ is the constant function $f: \mathbb{N} \rightarrow \mathbb{N}$ where $f(m) = c$ 
for all $m \in \mathbb{N}$.   The set of all $n$-bit strings is $\{0, 1\}^n$.  
It is convenient to identify the 2 bits in $\{0, 1\}$ with the elements $[0]$ and $[1]$ in $\mathbb{Z}_2$.

The least common multiple of positive integers $a$ and $b$ is \verb|lcm|$(a, b)$.  
Let $p_{1} = 2$, $p_{2} = 3$, $p_3 = 5$, $p_4 = 7$, $\dots$ where the  
 $n$th prime number is $p_n$.   Let $p$ be an odd prime.  
$p$ is called a $3 \bmod 4$  prime if $\frac{p-1}{2}$ is odd. 
$p$ is called a $1 \bmod 4$  prime if $\frac{p-1}{2}$ is even.



\subsection{Intuition and Motivation for Prime Clocks}

 Physical implementations of digital computers began in the latter half of the 1930's and 
 early designs were based on various implementations of logic gates 
 \cite{msf:aiken,msf:eniac,msf:atanasoff,msf:turing45,msf:zuse1,msf:zuse2}
 (e.g., mechanical switches, electro-mechanical devices, vacuum tubes).  The transistor 
 was conceptually invented \cite{msf:lilienfeld1,msf:lilienfeld2} in the late 1920's, 
 but the first working prototype \cite{msf:bardeen,msf:riordan} 
 was not demonstrated until 1947.   Transistors act as  
 building blocks for logic gates when they operate above threshold \cite{msf:mead}.    
 The transistor enabled the invention of the integrated 
 circuit \cite{msf:kilby,msf:noyce}, which is the physical basis for modern digital computers.  
 
 As an alternative to gates, prime clocks are based on the prime numbers 
and the notion of a common clock.  Consider the prime number 2 and the clock $[2, 0]$.  
The 2 means that the clock has two states $\{0, 1\}$ 
and the 0 means that the clock starts ticking 
from state 0 at time 0.  Shown in column 2 of table \ref{msf:tab_clocks_2_3_7_13}, the clock $[2, 0]$ ticks
 $0, 1, 0, 1,$ and so on.  
 In column 3 of table \ref{msf:tab_clocks_2_3_7_13}, the clock $[3, 1]$ has 3 states $\{0, 1, 2\}$ and 
 ticks $1, 2, 0, 1, 2, 0$ and so on.

\begin{table}[h]

\centering

\caption{ Some Prime Clocks and Sums in ${\mathbb{Z}_2}^{\mathbb{N}}$     } 

\smallskip 

\label{msf:tab_clocks_2_3_7_13}

\begin{tabular}{  c  c      c  c  c c c c c c c c  c  p{1.0cm}  }
   \hline
   Time   &  $[2,0]$ & $[3, 1]$  & $[2, 0] \oplus [3, 1]$ & $[7, 3]$ & $[13, 6]$ &   $[7, 3] \oplus [13, 6]$  \\
  \hline     
      0  &   0        &   1      &  {\bf 1}   &    3  &   6  &  {\bf 1} \\        
      1  &   1        &   2      &  {\bf 1}   &    4  &   7  &  {\bf 1} \\   
      2  &   0        &   0      &  {\bf 0}   &    5  &   8  &  {\bf 1} \\   
      3  &   1        &   1      &  {\bf 0}   &    6  &   9  &  {\bf 1} \\  
      4  &   0        &   2      &  {\bf 0}   &    0  &  10  &  {\bf 0} \\
      5  &   1        &   0      &  {\bf 1}   &    1  &  11  &  {\bf 0} \\
      $\dots$ \\ 
      
\end{tabular}
\end{table}

Expressed as $\oplus$ in table \ref{msf:tab_clocks_2_3_7_13}, two or more prime clocks can be added
and their sum can be projected into ${\mathbb{Z}_2}^{\mathbb{N}}$.   
The fourth column of table \ref{msf:tab_clocks_2_3_7_13} shows the sum of clocks $[2, 0]$ and $[3, 1]$,  
 projected into ${\mathbb{Z}_2}^{\mathbb{N}}$. This paper primarily focuses on prime clock sums, projected 
into ${\mathbb{Z}_2}^{\mathbb{N}}$, since they can compute Boolean functions.  
These sums have a mathematical property that has a practical application.  This property 
is formally stated in  theorem \ref{msf:theorem_fundamental_prime_clock_sum_boolean}:   
for every natural number $n$, every Boolean function 
$f: \{0, 1\}^n \rightarrow \{0, 1\}$ can be computed with a finite prime clock sum
that lies inside the infinite abelian group 
$( {\mathbb{Z}_2}^{\mathbb{N}}, \oplus)$.  This means prime clocks 
can act as computational primitives instead of gates \cite{msf:shannon,msf:vollmer}. A computer 
can be built from physical devices that implement prime clock sums.  

Prime clock addition $\oplus$ is associative and commutative.  These two group 
properties enable prime clocks to compute in parallel, while gates do not have 
this favorable property.  For example, $\neg(x \wedge y) \neq (\neg x) \wedge y$ because
$\neg(0 \wedge 0) = 1$ while $(\neg 0) \wedge 0 = 0$.  The unary operation $\neg$, 
 conjunction operation $\wedge$, and disjunction operation  $\vee$ 
form a Boolean algebra \cite{msf:halmos}, so circuits built from the \verb|NOT|, 
\verb|AND|, and \verb|OR| gates must have a depth.

\begin{figure}[h]

\centering

\includegraphics[width=0.5\columnwidth]{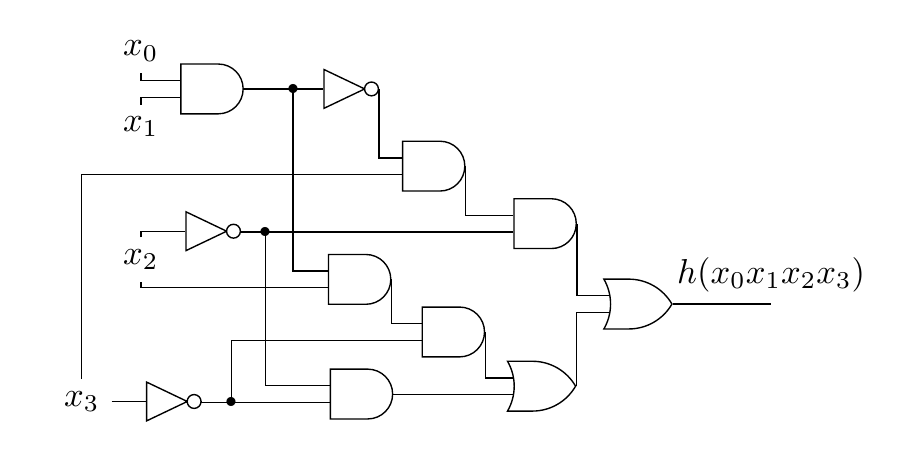}

\caption[ ]{ A gate-based circuit that computes $[7, 3] \oplus [13, 6]$ on $\{0, 1\}^4$. } 

\label{msf:fig_gate_circuit}


\end{figure}

Shown in the last column of table \ref{msf:tab_clocks_2_3_7_13},  the clock sum  $[7, 3] \oplus [13, 6]$, 
helps illustrate the disparity between the {\it parallelization of prime clock sums} 
versus the {\it circuit depth of gates}.   Figure \ref{msf:fig_gate_circuit} shows a gate-based circuit with 
depth 5 that computes $[7, 3] \oplus [13, 6]$ on  $\{0, 1\}^4$. 
This circuit computes Boolean function 
$h: \{0, 1\}^4 \rightarrow \{0, 1\}$, where  $h(x_0$ $x_1$ $x_2$ $x_3) =$  
$\big{[} \big{(} \neg (x_0  \wedge x_1) \big{)} \wedge (\neg x_2) \wedge x_3  \big{]}$
$\vee$
$\big{[} x_0  \wedge x_1 \wedge x_2 \wedge (\neg x_3)  \big{]}$
$\vee$
$\big{[} (\neg x_2) \wedge (\neg x_3)  \big{]}$.              
 Note  $\big{(} [7, 3] \oplus [13, 6] \big{)}(m) = h(x_0$ $x_1$ $x_2$ $x_3)$, 
 whenever $m = x_0 + 2x_1 + 4 x_2 + 8 x_3$.


This disparity enlarges for Boolean functions $f: \{0, 1\}^n  \rightarrow \{0, 1\}$ as $n$ increases.  
Informally, Shannon's theorem \cite{msf:shannon} implies that most 
functions  $f: \{0, 1\}^{n} \rightarrow \{0, 1\}$ require on the order of  
$\frac{2^{n}} {n}$  gates. 
More precisely, let $\beta(\epsilon, n)$ be the 
            number of distinct functions $f: \{0, 1\}^{n} \rightarrow \{0, 1\}$ that can be computed 
            by circuits with at most $(1- \epsilon) \frac{2^n}{n}$ gates built from the 
            \verb|NOT|, \verb|AND|, and \verb|OR| gates.   Shannon's theorem states for any $\epsilon > 0$ 

\begin{equation}
{\underset {n \rightarrow \infty} \lim}  \mbox{\hskip 0.1pc} \frac{ \beta(\epsilon, n)} { 2^{2^{n}} } = 0.
\end{equation} 


Let the gates of a circuit be labeled as $\{g_1, g_2, \dots, g_m \}$ where $m$ is about $\frac{2^n}{n}$. 
The graph connectivity of the circuit specifies that the output of gate $g_1$ connects to the 
input of gate $g_{k_1}$, and so on.  Shannon's theorem implies that for most of these Boolean functions the graph 
connectivity requires an exponential (in $n$) amount of information.  This is readily apparent after comparing 
the number of symbols used in  $[7, 3] \oplus [13, 6]$ versus the symbolic expression  
$\big{[} \big{(} \neg (x_0  \wedge x_1) \big{)} \wedge (\neg x_2) \wedge x_3  \big{]}$
$\vee$
$\big{[} x_0  \wedge x_1 \wedge x_2 \wedge (\neg x_3)  \big{]}$
$\vee$
$\big{[} (\neg x_2) \wedge (\neg x_3)  \big{]}$.

Consider a cryptographic application that uses a function 
$h: \{0, 1\}^{20} \rightarrow \{0, 1\}^{20}$, 
where  $h = (h_0, \dots, h_{19})$ and each 
$h_i: \{0, 1\}^{20} \rightarrow \{0, 1\}$ is highly nonlinear \cite{msf:stanica}. 
Then over 1 million gates can be required to compute $h$, since  
$\frac{2^{20}} {20} = 52428$ and there are 20 distinct $h_i$ functions.  
Using the first 559 prime numbers (i.e., all primes $\le 4051$), finite prime clock sums can 
compute any function $f_{20}: \{0, 1\}^{20}  \rightarrow \{0, 1\}$
even though there are  $2^{2^{20}} = 2^{1048576}$ distinct functions. 
This means a physical realization\footnote{Physical realizations of prime clocks are beyond the scope of this paper.} 
with prime clocks may use the first 599 prime numbers 
to implement an arbitrary $h: \{0, 1\}^{20} \rightarrow \{0, 1\}^{20}$.

Lastly, the structure of our paper is summarized.  Section \ref{msf:sect_prime_clocks} provides formal 
definitions of a prime clock, prime clock sums, and some results about the periodicity of finite 
prime clock sums.  Section \ref{msf:sect_prime_clocks_omega_2} covers prime clock sums projected into 
${\mathbb{Z}_2}^{\mathbb{N}}$, where the main theorem is that any 
Boolean function $f:\{0, 1\}^n \rightarrow \{0, 1\}$ can be computed with a finite prime clock sum.  
Section \ref{msf:sect_prime_clock_sums_algorithm} provides a parallelizable algorithm 
for computing a Boolean function with prime clock sums.



\section{Prime Clocks}\label{msf:sect_prime_clocks}

\begin{defn}\label{msf:defn_prime_clocks}  \hskip 1pc   {\it Prime Clocks}

\noindent Let $p$ be a prime number.   Let $t \in \mathbb{N}$ such that $0 \le t \le p-1$.
Define $[p, t]: \mathbb{N} \rightarrow \mathbb{N}$      
as $[p, t](m) = (m + t) \bmod p$. 
Function $[p, t]$ is called a $p$-{\it clock} that starts ticking with its hand pointing to $t$. 

\end{defn}

Herein the expression {\it prime clock} $[p, t]$ always assumes that $0 \le t \le p-1$.
Thus, if $p \ne q$ or $s \ne t$, then prime clock $[p, s]$ is not equal to $[q, t]$;
equivalently, if $p = q$ and $s = t$, then $[p, s]=[q, t]$.
For the $n$th prime $p_n$,  let $\mathcal{P}_n = \{[p_n, 0], [p_n, 1],$ 
$\dots, [p_n, p_n - 1] \}$ be the distinct $p_n$-clocks.  
The set of all prime clocks is defined as

\begin{equation}
\mathcal{P} = {\underset{n = 1} {\overset{\infty} \cup} } \mathcal{P}_n 
\end{equation}

\noindent  For $n \ge 2$, let $\Omega_n = {\mathbb{Z}_n}^{\mathbb{N}}$.  Define 
$\pi_n: \mathcal{P} \rightarrow \Omega_n$ as the projection of each
$p$-clock into $\Omega_n$ where $\pi_n([p, t](m)) = \big{(} [p, t](m)\big{)} \bmod n$. 
\begin{defn}\label{defn:prime_clock_sum_of_2}  \hskip 1pc

Let $n \in \mathbb{N}$ such that  $n \ge 2$.  
On the set $\mathcal{P}$ of all prime clocks, define the 
binary operator $\oplus_n$ as $\big{(} [p, s] \oplus_n [q, t]\big{)}(m) =$ 
$\big{(} [p, s](m) + [q, t](m) \big{)} \bmod n$, 
where $+$ is computed in $\mathbb{Z}$. Observe 
that $[p, s] \oplus_n [q, t] \in \Omega_n$. 
\end{defn}

\begin{defn}\label{defn:prime_clock_sum}  \hskip 1pc  {\it Finite Prime Clock Sum}

Similarly, with prime clocks $[q_1, t_1]$,  $[q_2, t_2] \dots$ and $[q_l, t_l]$, 
the function $[q_1, t_1]$ $\oplus_n [q_2, t_2] \dots \oplus_n [q_l, t_l]: \mathbb{N} \rightarrow \mathbb{Z}_n$
can be constructed.  For each $m \in \mathbb{N}$, define
$([q_1, t_1] \oplus_n [q_2, t_2] \oplus_n \dots \oplus_n [q_l, t_l])(m)$ 
$= \big{(} [q_1, t_1](m) + [q_2, t_2](m) + \dots + [q_l, t_l](m) \big{)} \bmod n$, 
where $+$ is computed in $\mathbb{Z}$.   
$[q_1, t_1] \oplus_n [q_2, t_2] \dots \oplus_n [q_l, t_l]$ is called 
a {\it finite prime clock sum} in $\Omega_n$. 
\end{defn}

\noindent  
Table \ref{msf:tab_omega_5_clock_table} shows a {\it finite prime clock sum} in $\Omega_5$.

\begin{table}[h]

\centering

\caption{ Some Prime Clocks and their Sum in $\Omega_5$  } 

\smallskip 

\label{msf:tab_omega_5_clock_table} 

 \begin{tabular}{ c  c  c  c   c  c c c c}
   \hline  
Time  & $[5, 3]$ & $[7, 6]$ & $[11, 3]$  & $[13, 0]$  & $[5, 3] \oplus_5 [7, 6] \oplus_5 [11, 3] \oplus_5 [13, 0] $\\
   \hline     
      0    &     3       &   6      &  3        &  0  &  2  \\     
      1    &     4       &   0      &  4        &  1  &  4  \\  
      2    &     0       &   1      &  5        &  2  &  3  \\ 
      3    &     1       &   2      &  6        &  3  &  2  \\ 
      4    &     2       &   3      &  7        &  4  &  1  \\
      5    &     3       &   4      &  8        &  5  &  0  \\
      6    &     4       &   5      &  9        &  6  &  4  \\  
      7    &     0       &   6      &  10       &  7  &  3  \\ 
      8    &     1       &   0      &  0        &  8  &  4  \\ 
      $\dots$     \\  

\end{tabular}

\end{table}

\begin{defn}\label{defn:prime_clock_many_sum} \hskip 1pc
Let  $r_1, \dots r_k$ be $k$ prime numbers and $q_1, \dots q_r$ be $r$ prime numbers.  
Let $f = [r_1, s_1] \oplus_n [r_2, s_2] \dots \oplus_n [r_k, s_k]$. 
Let $g = [q_1, t_1] \oplus_n [q_2, t_2] \dots \oplus_n [q_l, t_l]$.  
Define $f \oplus_n g$ in $\Omega_n$ as 
$\big{(}f \oplus_n g\big{)}$ $(m) = f(m) +_n g(m)$, 
where $+_n$ is the binary operator in the group $(\mathbb{Z}_n, +_n)$.  
\end{defn}

Definition \ref{defn:prime_clock_many_sum} is well-defined with respect to 
definition \ref{defn:prime_clock_sum} 
(i.e., $f \oplus_n g$ $=$  
$[r_1, s_1] \oplus_n [r_2, s_2] \dots \oplus_n [r_k, s_k]$ 
$\oplus_n$
$[q_1, t_1] \oplus_n [q_2, t_2] \dots \oplus_n [q_l, t_l]$
) because   
$(m_1 + m_2) \bmod n$ $=$
$ \big{(} (m_1 \bmod n) + (m_2\bmod n) \big{)} \bmod n$
 for any $m_1, m_2 \in \mathbb{N}$.


\begin{rem}\label{msf:rem_mod_n_addition} 
$(m_1 + m_2) \bmod n$ $=$
$ \big{(} (m_1 \bmod n) + (m_2 \bmod n) \big{)} \bmod n$ \hskip 0.1pc
for any $m_1, m_2 \in \mathbb{N}$.
\end{rem}

\begin{proof}
Euclid's division algorithm implies  $m_1 = k_1 n + r_1$ and  $m_2 = k_2 n + r_2$, 
where $0 \le r_1, r_2 < n$.  Thus,  $(m_1 + m_2) \bmod n$ 
$=$ $\big{(} (k_1 + k_2)n + r_1 + r_2 \big{)} \bmod n$  $=$ $(r_1 + r_2) \bmod n$ 
 $=$ $ \big{(} (m_1 \bmod n) + (m_2 \bmod n) \big{)} \bmod n$   
 \end{proof}

The binary operator $\oplus_n$ can be extended to all of $\Omega_n$.  For any $f, g \in \Omega_n$, define
$ \big{(}f \oplus_n g \big{)}$ $(m) = f(m) +_n g(m)$.  The associative property 
$(f \oplus_n g) \oplus_n h = f \oplus_n (g \oplus_n h)$ follows immediately from the fact that 
$+_n$ is associative.  The zero function $\overline{0}$, where 
$\overline{0}(m) = 0$ in $\mathbb{Z}_n$, is the identity in $\Omega_n$.  
For any $f$ in $\Omega_n$, its unique inverse $f^{-1}$ is defined as $f^{-1}(m) = -f(m)$, 
where $-f(m)$ is the inverse of $f(m)$ in the group
$(\mathbb{Z}_n, +_n)$.  The commutativity of $\oplus_n$ follows from the commutativity of 
$+_n$, so $(\Omega_n, \oplus_n)$ is an abelian group.

Let $\mathcal{Q}$ be a subset of the prime clocks $\mathcal{P}$.  Using the projection $\pi_n$ of $\mathcal{Q}$
into $\Omega_n$, define 
$S_{\mathcal{Q}} = \{ H:  H \supseteq \pi_n(\mathcal{Q})$ \verb|and| $H$ \verb|is a subgroup of| $\Omega_n \}$. 
The subset $\mathcal{Q}$ generates a subgroup of $(\Omega_n, \oplus_n)$.  Namely, 

\begin{equation}
{\underset{H \in S_{\mathcal{Q}} } {\cap} } H
\end{equation}

\noindent  We focus on subgroups of $\Omega_2$, 
generated by a finite number of prime clocks;  consequently, 
the more natural symbol $\oplus$ is used instead of $\oplus_2$.

\begin{defn}\label{defn:periodic} \hskip 1pc  {\it Periodic Functions}

$f \in \Omega_n$ is a {\it periodic function} if there exists a positive integer $b$ such that for 
every $m \in \mathbb{N}$, then $f(m) = f(m + b)$.  Furthermore, if $a$ is the smallest positive integer
such that $f(m) = f(m+a)$ for all $m \in \mathbb{N}$, then $a$ is called the {\it period} of $f$.
After $k$ substitutions of $m+a$ for $m$, this implies for any $m \in \mathbb{N}$ that  
$f(m) = f(m + ka)$ for all positive integers $k$.  
\end{defn}

\begin{table}[h]

\centering

\caption{ Some 2-Clocks, 3-Clocks and Sums in $\Omega_2$   }

\smallskip 

\label{msf:tab_2_3_prime_clocks}

\begin{tabular}{  c  c  c  c  c  c c c c c c c   c  p{1.0cm}  }
   \hline
  Time  &  $[2,0]$ & $[2,1]$ & $[3, 0]$ & $[3, 1]$  & $[2, 0] \oplus [3, 0]$ &  $[2, 1] \oplus [3, 0]$ & $[2, 0] \oplus [3, 1]$ \\
  \hline     
      0  &   0    &   1     &  0  &  1  &  {\bf 0}   &    {\bf 1}  &  {\bf 1}  \\        
      1  &   1    &   0     &  1  &  2  &  {\bf 0}   &    {\bf 1}  &  {\bf 1}  \\   
      2  &   0    &   1     &  2  &  0  &  {\bf 0}   &    {\bf 1}  &  {\bf 0}  \\   
      3  &   1    &   0     &  0  &  1  &  {\bf 1}   &    {\bf 0}  &  {\bf 0}  \\  
      4  &   0    &   1     &  1  &  2  &  {\bf 1}   &    {\bf 0}  &  {\bf 0}  \\
      5  &   1    &   0     &  2  &  0  &  {\bf 1}   &    {\bf 0}  &  {\bf 1}  \\
      \\
      6  &   0    &   1     &  0  &  1  &  {\bf 0}   &    {\bf 1}  &  {\bf 1}  \\
      7  &   1    &   0     &  1  &  2  &  {\bf 0}   &    {\bf 1}  &  {\bf 1}  \\  
      $\dots$ \\ 
\end{tabular}
\end{table}

Table \ref{msf:tab_2_3_prime_clocks} shows that both prime clocks 
$[2,0]$ and $[2,1]$ projected into $\Omega_2$ have period 2. 
Both prime clocks $[3,0]$ and $[3, 1]$ projected into $\Omega_2$ have period 3. 
Each prime clock sum $[2, 0] \oplus [3, 0]$, \hskip 0.3pc  $[2,1] \oplus [3, 0]$ 
\hskip 0.3pc and \hskip 0.3pc  $[2, 0] \oplus [3, 1]$ has period 6.


For any $f \in \Omega_n$, define the relation ${\underset{f}\sim}$ on $\mathbb{N}$ 
such that $x  {\underset{f}\sim} y$ if and only 
if for all $m \in  \mathbb{N}$, $f(m) = f(m + |y - x|)$.  Trivially, ${\underset{f}\sim}$ is reflexive and
symmetric.  

To verify transitivity of ${\underset{f}\sim}$, suppose $x  {\underset{f}\sim} y$ and 
$y  {\underset{f}\sim} z$.  W.L.O.G., suppose $x \le y \le z$.  (The other orderings of $x$, $y$ and $z$
can be handled by permuting $x$, $y$ and $z$ in the following steps.)
This means for all $m \in  \mathbb{N}$, $f(m + y - x) = f(m)$;  and
for all $k \in  \mathbb{N}$,  $f(k) = f(k + z - y)$.  This implies that 
for all $m \in  \mathbb{N}$, $f(m + z - x) = f(m + z - y + y - x) = f(m + y - x) = f(m)$.  
Thus, ${\underset{f}\sim}$ is an equivalence relation.


When $f$ is periodic with period $a$, each equivalence class is 
of the form $[k] =\{k + ma: m \in \mathbb{N} \}$, where $0 \le k < a$.
Thus, $f$ has period $a$ implies there are $a$ distinct equivalence classes
on $\mathbb{N}$ with respect to ${\underset{f}\sim}$.

\begin{rem}\label{msf:rem_periodic_a_divides_b}   \hskip 1pc 
If $a$ is the period of $f$ and $b$ is a positive integer such that 
$f(m) =$ $f(m+b)$ for all $m \in \mathbb{N}$, then $a$ divides $b$. 
\end{rem}

\begin{proof}
First, verify that $a {\underset{f}\sim} b$.  
  By the definition of period, $a \le b$ and for all $m \in \mathbb{N}$, 
  then $f(m + b - a) = f(m + a + b - a) = f(m + b) = f(m)$.  
  From the prior observation, $a$ lies in $[0]$ and $b$ also lies in $[0]$.  
  Thus, $b = ma$ for some positive integer $m$.
\end{proof}

\begin{lem}\label{msf:lem_periodic}  \hskip 1pc 
If $f, g \in \Omega_n$ are periodic, then $f \oplus_n g$ is periodic.  Further, if the period of $f$
is $a$ and the period of $g$ is $b$, then $f \oplus_n g$ has a period that divides \verb|lcm|$(a, b)$. 
\end{lem}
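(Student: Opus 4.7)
The plan is to produce a positive integer that serves as a period of $f \oplus_n g$, and then invoke Remark~\ref{msf:rem_periodic_a_divides_b} to conclude that the true period divides it. The natural candidate is $L = \mathrm{lcm}(a, b)$ itself.

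First I would write $L = ka$ and $L = jb$ for positive integers $k, j$. Using the last sentence of Definition~\ref{defn:periodic} (which extends the period property from a single shift of $a$ to any multiple $ka$), I would conclude that $f(m + L) = f(m + ka) = f(m)$ for all $m \in \mathbb{N}$, and similarly $g(m + L) = g(m + jb) = g(m)$ for all $m \in \mathbb{N}$.

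Next I would compute directly from Definition~\ref{defn:prime_clock_many_sum} (as extended to all of $\Omega_n$ in the paragraph after Remark~\ref{msf:rem_mod_n_addition}) that
\begin{equation*}
(f \oplus_n g)(m + L) \;=\; f(m+L) +_n g(m+L) \;=\; f(m) +_n g(m) \;=\; (f \oplus_n g)(m)
\end{equation*}
for every $m \in \mathbb{N}$. This establishes the first claim: $f \oplus_n g$ is periodic, with $L$ a positive integer witnessing the periodicity.

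Finally, if $c$ denotes the actual (smallest) period of $f \oplus_n g$, then since $(f \oplus_n g)(m) = (f \oplus_n g)(m + L)$ for all $m$, Remark~\ref{msf:rem_periodic_a_divides_b} (applied to $f \oplus_n g$ in place of $f$) gives $c \mid L = \mathrm{lcm}(a,b)$, which is the second claim. There is no real obstacle here; the only subtlety is remembering to cite the ``iterate the period'' clause built into Definition~\ref{defn:periodic} so that a single-step period $a$ yields the multi-step identity $f(m+ka) = f(m)$ needed to apply the argument to $L$ rather than to $a$ itself.
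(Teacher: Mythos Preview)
Your proposal is correct and follows essentially the same approach as the paper: write $\mathrm{lcm}(a,b)$ as a multiple of each period, use the iterated-period clause of Definition~\ref{defn:periodic} to get $f(m+L)=f(m)$ and $g(m+L)=g(m)$, combine via the definition of $\oplus_n$, and finish with Remark~\ref{msf:rem_periodic_a_divides_b}. The paper's proof is identical up to notation.
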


\begin{proof}
Let $a$ be the period of $f$ and $b$ the period of $g$. Let $l_{a,b} =$ \verb|lcm|$(a, b)$.
$l_{a,b} = ia$ and $l_{a,b} = jb$ for positive integers $i, j$.  For any $m \in \mathbb{N}$, 
$\big{(} f \oplus_n g \big{)}(m)$  $=  f(m) +_n g(m)$
$= f(m + ia) +_n g(m + jb)= f(m + l_{a,b}) +_n g(m + l_{a,b})$
$= \big{(} f \oplus_n g \big{)}(m + l_{a,b})$.  
Thus, $f \oplus_n g$ is periodic and remark \ref{msf:rem_periodic_a_divides_b} 
implies its period divides $l_{a,b}$.  
\end{proof}

\smallskip 

\noindent  In regard to lemma \ref{msf:lem_periodic}, if $g = -f$, then the period of $f \oplus_n g$ is 1.  

\begin{rem}\label{msf:rem_num_periodic_functions}   \hskip 1pc 
There are $n^a$ distinct periodic functions $f \in \Omega_n$ whose period divides $a$.
\end{rem}

\begin{proof}
Since $f$ is periodic and its period divides $a$, the values of 
$f(0)$, $f(1)$, $\dots$, $f(a-1)$ uniquely determine $f$.  There are $n$ choices for $f(0)$.
There are $n$ choices for $f(1)$, and so on.  
\end{proof}

\begin{rem}\label{msf:rem_num_periodic_functions}   \hskip 1pc 
Let $p$ be prime. There are $n^p-n$ distinct periodic functions $f \in \Omega_n$ with period $p$.
\end{rem}

\begin{proof}
Consider a finite sequence $c_0$, $c_1$, $\dots$, $c_{p-1}$ of length $p$ where each $c_i \in \mathbb{Z}_n$
This sequence uniquely determines a periodic $f$ such that $f(m + p) = f(m)$ for all $m \in \mathbb{N}$.  In particular,
$f(0) = c_0$, $f(1) = c_1$, $\dots$, $f(p-1)=c_{p-1}$.   There are $n^p$ periodic functions with a period that divides $p$.  
If the period of $f$ is less than $p$, then 
remark \ref{msf:rem_periodic_a_divides_b} implies $f$ has period 1 since $p$ is prime.  There are 
$n$ distinct, constant (period 1) functions in $\Omega_n$  Thus, the remaining $n^p - n$ periodic functions
have period $p$.  
\end{proof}

\begin{rem}\label{msf:rem_periodic}  \hskip 1pc 
The prime clock $[p, t]$, projected into $\Omega_n$, has period $p$.  
\end{rem}


\begin{proof}  
Since $p$ is prime, this follows immediately from remark \ref{msf:rem_periodic_a_divides_b}.  
\end{proof}

\begin{thm}  \hskip 1pc  Finite Prime Clock Sums are Periodic

\smallskip 

\noindent Any finite sum of prime clocks $[q_1, t_1] \oplus_n [q_2, t_2] \oplus_n \dots \oplus_n [q_l, t_l]$ is periodic.  

\end{thm}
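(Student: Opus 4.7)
The plan is to proceed by induction on $l$, the number of prime clocks in the sum, using \lemref{msf:lem_periodic} together with the remark (labeled \verb|msf:rem_periodic|) that each individual prime clock, projected into $\Omega_n$, is periodic. Since both tools are already in hand, essentially all the work of the theorem has been front-loaded into those two prior results; what remains is to glue them together cleanly.

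For the base case $l=1$, the remark states that $[q_1,t_1]$ has period $q_1$ in $\Omega_n$, so it is periodic. For the inductive step, suppose every sum of $l-1$ prime clocks is periodic. Write
\[
  [q_1,t_1]\oplus_n\cdots\oplus_n[q_l,t_l]
  \;=\;\bigl([q_1,t_1]\oplus_n\cdots\oplus_n[q_{l-1},t_{l-1}]\bigr)\oplus_n[q_l,t_l].
\]
By the inductive hypothesis, the parenthesized expression is a periodic element of $\Omega_n$; by \verb|msf:rem_periodic|, the clock $[q_l,t_l]$ is also periodic in $\Omega_n$ with period $q_l$. Applying \lemref{msf:lem_periodic} to this pair yields that the full sum is periodic, and moreover its period divides $\mathtt{lcm}$ of the period of the first $l-1$ summands and $q_l$; iterating, the period divides $\mathtt{lcm}(q_1,q_2,\dots,q_l)$, which is just the product of the \emph{distinct} primes appearing among the $q_i$.

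I do not expect any real obstacle: the associativity of $\oplus_n$ (already noted as inherited from $+_n$) is what legitimizes peeling off the last summand, and the only subtle point is keeping track of the fact that $\oplus_n$ on two prime clocks, as defined in Definition~\ref{defn:prime_clock_sum_of_2}, yields an element of $\Omega_n$ rather than of $\mathcal{P}$, so the inductive hypothesis must be stated for elements of $\Omega_n$ and \lemref{msf:lem_periodic} invoked in $\Omega_n$. This is exactly the setting already prepared by Definition~\ref{defn:prime_clock_many_sum} and the extension of $\oplus_n$ to all of $\Omega_n$, so no additional machinery is required.
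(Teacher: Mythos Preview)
Your proof is correct and follows essentially the same approach as the paper's own proof, which simply reads ``Use induction and apply remark \ref{msf:rem_periodic} and lemma \ref{msf:lem_periodic}.'' Your write-up just supplies the details of that induction (and the extra observation about the period dividing $\mathrm{lcm}(q_1,\dots,q_l)$ is a harmless bonus).
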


\begin{proof}
Use induction and apply remark \ref{msf:rem_periodic} and lemma \ref{msf:lem_periodic}.   
\end{proof}



\section{Prime Clock Sums in $\Omega_2$ }\label{msf:sect_prime_clocks_omega_2}

\begin{rem}\label{msf:rem_identity_Omega_2}  \hskip 1pc  $[p, t] \oplus [p, t] = \overline{0}$ 
for any prime clock $[p, t]$.  


\noindent  Per definition \ref{defn:prime_clock_sum_of_2}, 
$\big{(}[p, k] \oplus [p, k] \big{)} (m)$
$= \big{(} [p, k](m) + [p, k](m) \big{)} \bmod 2 = 0$ in $\mathbb{Z}_2$.  
\end{rem}


\begin{rem}\label{msf:rem_p_p_clocks_sum}  \hskip 1pc  Let $p$ be an odd prime.  
If $p$ is a  $3 \bmod 4$  prime,                  
then $[p, 0] \oplus [p, 1] \oplus \dots \oplus [p, p-1] = \overline{1}$.
If $p$ is a $1 \bmod 4$  prime,    
then $[p, 0] \oplus [p, 1] \oplus \dots \oplus [p, p-1] = \overline{0}$.
\end{rem}

\begin{proof}
$  \big{(}  [p, 0] \oplus [p, 1] \oplus \dots \oplus [p, p-1] \big{)} (0)$ 
       $= (0 + 1 + \dots + p-1) \bmod 2 = \frac{1}{2}(p - 1)p \bmod 2$.  
For each $m > 0$, $\big{(} [p, 0] \oplus [p, 1] \oplus \dots \oplus [p, p-1]\big{)} (m)$ 
is a permutation of the sum  inside $(0 + 1 + \dots + p-1) \bmod 2$.   
\end{proof}



\noindent  For the special case $p = 2$, observe that $[2, 0] \oplus [2, 1] = \overline{1}$.

\begin{defn}\label{defn:non_repeating}  \hskip 1pc 
A finite sum $[q_1, t_1] \oplus [q_2, t_2] \oplus \dots \oplus [q_l, t_l]$ of 
prime clocks is {\it non-repeating} if $i \ne j$ implies $[q_i, t_i]$ is 
not equal to $[q_j, t_j]$. 
\end{defn}

\begin{rem}\label{msf:rem_non_repeating}   \hskip 1pc 
Any finite sum $[q_1, t_1] \oplus [q_2, t_2] \oplus \dots \oplus [q_l, t_l]$ of prime clocks in $\Omega_2$ 
can be reduced to a non-repeating finite sum 
$[q_{i_1}, t_{i_1}] \oplus [q_{i_2}, t_{i_2}] \oplus \dots \oplus [q_{i_r}, t_{i_r}]$, where $r \le l$ such that
for any $m \in \mathbb{N}$, 
$\big{(}[q_1, t_1] \oplus [q_2, t_2] \oplus \dots \oplus [q_l, t_l]\big{)}$ $(m)=$ 
$\big{(}[q_{i_1}, t_{i_1}] \oplus [q_{i_2}, t_{i_2}] \oplus \dots \oplus [q_{i_r}, t_{i_r}]\big{)}$$(m)$.
\end{rem}

\begin{proof}
Since $(\Omega_2, +_2)$ is abelian, if necessary, rearrange the order of 
$[q_1, t_1] \oplus [q_2, t_2] \oplus \dots \oplus [q_l, t_l]$, so that the prime clocks are ordered using
the dictionary order.  If two or more adjacent prime clocks are equal, then the associative property and 
remark \ref{msf:rem_identity_Omega_2} enables the cancellation of even numbers of equal prime clocks. This reduction
can be performed a finite number of times so that the resulting sum is non-repeating.   
\end{proof}


\begin{defn}  \hskip 1pc 
Let $p$ be a prime.  A finite sum of prime clocks 
$[p, t_1] \oplus [p, t_2] \oplus \dots [p, t_{l-1}] \oplus [p, t_{l}]$ is 
called a $p$-{\it clock sum} of length $l$ if for each $1 \le i \le l$, the clock $[p, t_i]$ 
is a $p$-clock and the sum is non-repeating.  
The non-repeating condition implies $l \le p$.
\end{defn}

\begin{lem}\label{msf:lem_p_clock_sums_period_p_or_1}  \hskip 1pc 
Let $p$ be a prime.  A $p$-clock sum with length $p$ has period $1$.  
A $p$-clock sum with length $l$ such that $1 \le l < p$ has period $p$.  
\end{lem}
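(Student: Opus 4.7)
The plan is to dispatch the two regimes separately. When $l = p$, the sum ranges over all $p$ distinct $p$-clocks, and Remark \ref{msf:rem_p_p_clocks_sum} (supplemented by the observation $[2,0] \oplus [2,1] = \overline{1}$ in the $p = 2$ case) shows it equals $\overline{0}$ or $\overline{1}$; either way the sum is constant, so its period is $1$.

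For $1 \le l < p$, each summand $[p, t_i]$ has period $p$ by Remark \ref{msf:rem_periodic}, so by iterated application of Lemma \ref{msf:lem_periodic} the sum $f = [p, t_1] \oplus \cdots \oplus [p, t_l]$ has a period dividing $p$. Since $p$ is prime, that period is $1$ or $p$, and I only need to rule out the constant case. If $l = 1$ the sum is a single prime clock and Remark \ref{msf:rem_periodic} gives period $p$ directly. For $l \ge 2$ we necessarily have $p \ge 3$, so $p$ is odd, and this is what drives the rest of the argument.

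The key step is to inspect the one-step difference $f(m+1) - f(m) \pmod{2}$. For each $i$, the integer $((m+1+t_i) \bmod p) - ((m+t_i) \bmod p)$ equals $1$ unless $(m+t_i) \bmod p = p-1$, in which case it equals $-(p-1)$. With $p$ odd, $-(p-1)$ is even, so the $i$th term contributes $1$ modulo $2$ exactly when there is no wraparound. Writing $c(m) = |\{ i : t_i \equiv p - 1 - m \pmod{p} \}|$, which lies in $\{0, 1\}$ because the sum is non-repeating, I obtain $f(m+1) - f(m) \equiv l - c(m) \pmod{2}$.

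If $f$ were constant, then $l \equiv c(m) \pmod{2}$ for every $m$. But as $m$ ranges over $\{0, 1, \dots, p-1\}$, the residue $p - 1 - m$ hits every element of $\mathbb{Z}_p$ exactly once, so $c(m) = 1$ for exactly $l$ values of $m$ and $c(m) = 0$ for the remaining $p - l$ values. Demanding that the parity of $l$ match both $0$ and $1$ forces $l = 0$ or $l = p$, contradicting $1 \le l < p$, so $f$ is nonconstant and the period must be $p$. The one real obstacle is bookkeeping the carry that occurs when a hand wraps from $p-1$ back to $0$; the trick is that oddness of $p$ makes $-(p-1)$ vanish modulo $2$ and isolates the wrap events cleanly, after which a pure counting argument finishes the job.
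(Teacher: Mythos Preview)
Your proof is correct, and it rests on the same core observation the paper uses: examine the one-step increment $f(m+1)-f(m)$ modulo $2$ and track which clocks wrap from $p-1$ back to $0$. Where the paper splits into cases on the parity of $l$ (and, within the odd case, on whether $s_l = p-1$), then exhibits a particular $m$ at which $f(m)\ne f(m+1)$, you instead derive the uniform identity $f(m+1)-f(m)\equiv l - c(m)\pmod 2$ and finish with a counting argument: since the $t_i$ are distinct, $c(m)\in\{0,1\}$, and over a full cycle $c(m)=1$ exactly $l$ times and $c(m)=0$ exactly $p-l$ times, so for $1\le l<p$ both values occur and constancy of $f$ is impossible. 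This buys you a single clean argument in place of the paper's case analysis; the paper's version, on the other hand, is slightly more constructive in that it actually names an index $m$ witnessing nonconstancy. Your handling of the boundary cases ($l=p$ via Remark~\ref{msf:rem_p_p_clocks_sum} and the $p=2$ observation; $l=1$ via Remark~\ref{msf:rem_periodic} to cover $p=2$) is also sound.
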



\begin{proof}
When $p = 2$, the $2$-clock sum $[2, 0]$ has period 2 and the 
$2$-clock sum $[2, 1]$ also has period 2. Recall that $[2, 0] \oplus [2, 1] = \overline{1}$. 
For the remainder of the proof, it is assumed that $p$ is an odd prime.

Let $[p, t_1] \oplus [p, t_2] \oplus \dots [p, t_{l-1}] \oplus [p, t_{l}]$ be a $p$-clock 
sum. When $l = p$, remark \ref{msf:rem_p_p_clocks_sum} implies that 
$[p, t_1] \oplus [p, t_2] \oplus \dots [p, t_{l-1}] \oplus [p, t_{l}]$ has period 1.  
Lemma \ref{msf:lem_periodic} and remark \ref{msf:rem_periodic} imply that 
$[p, t_1] \oplus [p, t_2] \oplus \dots [p, t_{l-1}] \oplus [p, t_{l}]$ has period $p$ or period $1$. 
The rest of this proof shows that $1 \le l \le p-1$ implies that the $p$-clock sum cannot have period $1$.

Thus, it suffices to show that $1 \le l < p$ implies that 
$\big{(} [p, t_1] \oplus [p, t_2] \oplus \dots  \oplus [p, t_{l}] \big{)} (m) \ne$ 
$\big{(} [p, t_1] \oplus [p, t_2] \oplus \dots  \oplus [p, t_{l}] \big{)} (m+1)$ for some $m \in \mathbb{N}$.  
If needed, the $p$-clock sum may be permuted so that 
$[p, s_1]  \oplus [p, s_2] \oplus \dots  \oplus [p, s_{l}] =$
$[p, t_1]  \oplus [p, t_2] \oplus \dots  \oplus [p, t_{l}]$ 
and the $s_i$ are strictly increasingly.  (Strictly increasing means  
$0 \le s_1 < s_2$ $\dots$ $s_{l-1} < s_l \le p-1$.) 

\smallskip 

Case A.  \hskip 0.5pc $l$ is odd.  If $s_l < p-1$, then 
$\big{(} [p, s_1]  \oplus [p, s_2] \oplus \dots  \oplus [p, s_{l}] \big{)}(0)$ 
$= {\overset{l} {\underset{i = 1} \sum} s_i} \bmod 2$  $\ne$ 
${\overset{l} {\underset{i = 1} \sum} (s_i + 1) } \bmod 2$  
$= \big{(} [p, s_1]  \oplus [p, s_2] \oplus \dots  \oplus [p, s_{l}] \big{)}(1)$  because $l$ is odd.  
Otherwise, $s_l = p-1$.  Set $s_0 = 0$.  (The auxiliary index $s_0 = 0$ handles the case
 $s_{k+1} - s_k$ for all $k$ such that $1 \le k < l$.) \hskip 0.3pc 
Set  $m = \max  \big{\{} k \in \mathbb{N}:$  $(s_{k+1} - s_k) \ge 2$ 
     \hskip 0.1pc  \verb|and| \hskip 0.1pc  $0 \le k < l \big{\}}$.
 Since $s_0 = 0$ and $1 \le l < p$, the pigeonhole principle implies $m$ exists.  
Before the $\mod$ $2$ step, the difference between
${\overset{l} {\underset{i=1}{\sum}} } \big{(}(s_i + l - m + 1) \bmod p\big{)}$  
and 
${\overset{l} {\underset{i=1}{\sum}} } \big{(}(s_i + l - m) \bmod p \big{)}$ 
equals $l$.  Hence, 
  $\big{(} [p, s_1]  \oplus [p, s_2] \oplus \dots  \oplus [p, s_{l}] \big{)}(l-m) \ne$ 
$\big{(} [p, s_1]  \oplus [p, s_2] \oplus \dots  \oplus [p, s_{l}] \big{)}(l-m+1)$. 

\smallskip

Case B.  \hskip 0.5pc $l$ is even.  Set $j = (p - 1) - s_l$.  Before the $\bmod$ $2$ step,
the sum  ${\overset{l} {\underset{i=1}{\sum}} } \big{(}(s_i + j) \bmod p \big{)}$ 
differs from the sum  
${\overset{l} {\underset{i=1}{\sum}} } \big{(}(s_i + j + 1) \bmod p \big{)}$  by an odd number.   
Thus,  $\big{(} [p, s_1]  \oplus \dots  \oplus [p, s_{l}] \big{)}(j) \ne$
$\big{(} [p, s_1]  \oplus \dots  \oplus [p, s_{l}] \big{)}(j+1)$.   
\end{proof}



\begin{defn}  \hskip 1pc 
Let $p$ be prime. Suppose the times are strictly increasing:  that is, 
$s_1 < s_2 \dots < s_l$ and $t_1 < t_2 \dots < t_m$.  
 Suppose $\max \{l, m\} \le p$.  Then  
$p$-clock sum $[p, s_1]  \oplus \dots  \oplus [p, s_{l}]$ is {\it distinct} from 
$p$-clock sum $[p, t_1]  \oplus \dots  \oplus [p, t_{m}]$ \hskip 0.3pc  
if $l \ne m$ or for some $i$, $s_i \ne t_i$.  
\end{defn}

7-clock sum $[7,0] \oplus [7,2] \oplus [7, 3]$ is distinct from 
$[7,1] \oplus [7,2] \oplus [7, 3]$.  Table  \ref{msf:tab_7_clocks_in_Omega_2} 
shows that these distinct 7-clock sums are not equal.

\begin{table}[h]

\centering
 
\caption{ Two distinct 7-clock sums that are not equal in $\Omega_2$  }

\smallskip 

\label{msf:tab_7_clocks_in_Omega_2}


 \begin{tabular}{ c c c c c c c c c c c}
   \hline
 Time & $[7, 0]$ & $[7, 1]$ & $[7, 2]$ & $[7, 3]$ & $[7, 0] \oplus [7, 2] \oplus [7, 3]$ & $[7, 1] \oplus [7, 2] \oplus [7, 3]$ \\
\hline     
      0   &  0  &  1   & 0  & 1  &  1  &  0  \\
      1   &  1  &  0   & 1  & 0  &  0  &  1  \\
      2   &  0  &  1   & 0  & 1  &  1  &  0  \\
      3   &  1  &  0   & 1  & 0  &  0  &  1  \\
      4   &  0  &  1   & 0  & 0  &  0  &  1  \\
      5   &  1  &  0   & 0  & 1  &  0  &  1  \\
      6   &  0  &  0   & 1  & 0  &  1  &  1  \\
      $\dots$    \\ 
\end{tabular}
\end{table}


\begin{thm}\label{msf:theorem_distinct_p_sums_are_not_equal}  \hskip 1pc 
For any $3 \bmod 4$ prime $p$, 
if two $p$-clock sums are distinct, then they are not equal in $\Omega_2$. 
The theorem also holds for $p = 2$. 
\end{thm}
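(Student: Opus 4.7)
The plan is to reduce the question to: for a $3 \bmod 4$ prime $p$, the only subset $C \subseteq \{0, 1, \dots, p-1\}$ whose associated $p$-clock sum $\bigoplus_{c \in C}[p, c]$ equals $\overline{0}$ in $\Omega_2$ is $C = \emptyset$. Once this is established, the theorem follows quickly by working in the abelian group $(\Omega_2, \oplus)$. Suppose two distinct $p$-clock sums $A = [p, s_1] \oplus \dots \oplus [p, s_l]$ and $B = [p, t_1] \oplus \dots \oplus [p, t_m]$, indexed by distinct subsets $S, T \subseteq \{0, \dots, p-1\}$, satisfy $A = B$. Then $A \oplus B = \overline{0}$. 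Using Remark~\ref{msf:rem_identity_Omega_2} to cancel each pair $[p, u] \oplus [p, u]$ appearing in $A \oplus B$ (via Remark~\ref{msf:rem_non_repeating}), what remains is the non-repeating $p$-clock sum $\bigoplus_{c \in C}[p, c]$ where $C = S \triangle T$. Since $S \ne T$, the symmetric difference $C$ is non-empty.

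The next step is to show that such a non-empty $C$ cannot give $\overline{0}$. Let $l = |C|$, so $1 \le l \le p$. By Lemma~\ref{msf:lem_p_clock_sums_period_p_or_1}, a $p$-clock sum of length $l$ with $1 \le l < p$ has period exactly $p$. Since $\overline{0}$ has period $1$ and $p \ge 2$, the case $1 \le l < p$ is ruled out. The only remaining possibility is $l = p$, i.e.\ $C = \{0, 1, \dots, p-1\}$. But by Remark~\ref{msf:rem_p_p_clocks_sum}, the full sum $[p, 0] \oplus [p, 1] \oplus \dots \oplus [p, p-1]$ equals $\overline{1}$ (not $\overline{0}$) precisely because $p$ is a $3 \bmod 4$ prime. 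Thus $A \oplus B \ne \overline{0}$, contradicting the assumption, and $A \ne B$ whenever $A$ and $B$ are distinct.

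For the $p = 2$ case, I would handle it separately, though essentially the same reasoning works: the only non-empty subsets of $\{0, 1\}$ are $\{0\}, \{1\}, \{0, 1\}$, yielding $[2, 0]$, $[2, 1]$, and $[2, 0] \oplus [2, 1] = \overline{1}$ respectively; none of these equals $\overline{0}$, so again $A \oplus B = \overline{0}$ forces $S \triangle T = \emptyset$, i.e.\ $S = T$.

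The main conceptual obstacle is recognizing how to translate the equality of two sums into a single sum equaling $\overline{0}$, and identifying which hypothesis about $p$ is doing the work at each step. Specifically, the lemma on the period of a $p$-clock sum eliminates all lengths $1 \le l < p$ regardless of the residue class of $p$, while the $3 \bmod 4$ hypothesis is invoked exactly once, to rule out the length-$p$ case via the parity calculation $\tfrac{1}{2}(p-1)p \bmod 2$. This clean split suggests the proof will be short and essentially a bookkeeping exercise once the previous results are in hand.
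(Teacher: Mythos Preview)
Your proposal is correct and follows essentially the same route as the paper: reduce the equality $A=B$ in the abelian group $(\Omega_2,\oplus)$ to a single non-repeating $p$-clock sum equalling $\overline{0}$, then invoke Lemma~\ref{msf:lem_p_clock_sums_period_p_or_1} to rule out lengths $1\le l<p$ and Remark~\ref{msf:rem_p_p_clocks_sum} (the $3\bmod 4$ parity) to rule out $l=p$. Your formulation via the symmetric difference $S\triangle T$ is a slightly tidier packaging of the same cancellation the paper does term by term, and your handling of $p=2$ is more explicit than the paper's appeal to a table, but the underlying argument is identical.
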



\begin{proof}
The special case $p = 2$ can be verified by examining the second and third columns 
of table \ref{msf:tab_2_3_prime_clocks}.    

Let $p$ be a $3 \bmod 4$ prime.  
Assume $p$-clock sum $[p, s_1]  \oplus \dots  \oplus [p, s_{l}]$ is distinct from 
$p$-clock sum $[p, t_1]  \oplus \dots  \oplus [p, t_{m}]$.  By reductio absurdum, suppose

\begin{equation}\label{eqn:p_clock_sums}
[p, s_1]  \oplus \dots  \oplus [p, s_{l}] = [p, t_1]  \oplus \dots  \oplus [p, t_{m}]. 
\end{equation}

\noindent  For each  $s_i \in \{t_1, \dots, t_m \}$, the operation 
$\oplus [p, s_i]$ in $\Omega_2$ can be applied to both sides of equation \ref{eqn:p_clock_sums}.  
Similarly, for each $t_j \in \{s_1, \dots, s_l \}$, the operation
$\oplus [p, t_j]$ can be applied to both sides of equation \ref{eqn:p_clock_sums}.
Since $(\Omega_2, \oplus)$ is an abelian group, equation \ref{eqn:p_clock_sums} 
can be simplified to 
$[p, s_1]  \oplus \dots  \oplus [p, s_{L}]$ $=$ $[p, t_1]  \oplus \dots  \oplus [p, t_{M}]$ such that
$\{s_1, \dots, s_L\} \cap \{t_1, \dots, t_M\} = \emptyset$ and $M + L \le p$.

Set $f = [p, s_1]  \oplus \dots  \oplus [p, s_{L}]$.  
Apply $f \oplus$ to both sides of  
$[p, s_1]  \oplus \dots  \oplus [p, s_{L}]$ $=$ $[p, t_1]  \oplus \dots  \oplus [p, t_{M}]$. 
This simplifies to $f \oplus [p, t_1]  \oplus \dots  \oplus [p, t_{M}] = \overline{0}$.  
Lemma  \ref{msf:lem_p_clock_sums_period_p_or_1} implies that $L + M = p$.  
Since $L + M = p$ and $\{s_1, \dots, s_L\} \cap \{t_1, \dots, t_M\} = \emptyset$ 
and $p$ is a $3 \bmod 4$  prime, 
remark \ref{msf:rem_p_p_clocks_sum} implies that   $f \oplus [p, t_1]  \oplus \dots  \oplus [p, t_{M}] = \overline{1}$. 
This is a contradiction, so $[p, s_1]  \oplus \dots  \oplus [p, s_{l}]$ is not equal to 
 $[p, t_1]  \oplus \dots  \oplus [p, t_{m}]$ in $\Omega_2$.  
\end{proof}


\medskip

Let $\mathcal{S}_l$ be the set of all $p$-clock sums of length $l$, where $1 \le l \le p$.
There are ${p \choose l}$ distinct $p$-clock sums in each set $\mathcal{S}_l$. 
Set $G_p = {\overset{p} {\underset{l=1}\cup}} \mathcal{S}_l$ $\cup$ $\{ \overline{0} \}$. 
For any $f, g \in G_p$,  remark \ref{msf:rem_identity_Omega_2} implies $f \oplus g^{-1}$ in $G_p$.  
Thus, $(G_p, \oplus)$ is an abelian subgroup of $\Omega_2$.
Set $B_p = \{0, 1\}^p$.  For any $a_1 \dots a_p \in B_p$ and
$b_1 \dots b_p \in B_p$, define  $a_1 \dots a_p$ $+_2$ $b_1 \dots b_p$ $=$ 
$c_1 \dots c_p$, where $c_i = (a_i + b_i) \bmod 2$.  $(B_p, +_2)$ is an abelian group 
with $2^p$ elements.   When $p$ is a $3 \bmod 4$ prime, 
define the function $\phi: G_p \rightarrow B_p$ 
where $\phi(\overline{0}) = 0 \dots 0$ $\in B_p$
and $\phi( [p, t_1] \oplus [p, t_2] \oplus \dots [p, t_l] ) = c_1 \dots c_p$ where 
$c_i = \big{(} [p, t_1] \oplus [p, t_2] \oplus \dots [p, t_l] \big{)} (i)$.  We reach  
theorem \ref{msf:theorem_3_mod_4_prime} because $\phi$ is a group isomorphism.

\begin{thm}\label{msf:theorem_3_mod_4_prime}  \hskip 1pc 
Let $p$ be a $3 \bmod 4$ prime.  
The subgroup $G_p$ of $\Omega_2$, generated by the $p$-clocks 
$[p, 0], [p, 1], \dots [p, p-1]$  has order $2^p$ and is isomorphic to $(B_{p}, +_2)$. 
\end{thm}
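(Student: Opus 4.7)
The plan is to verify that the map $\phi: G_p \to B_p$ defined immediately before the theorem is a bijective group homomorphism between two finite groups of the same order $2^p$. Once cardinality, homomorphism property, and injectivity are in place, the isomorphism follows automatically.

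First I would establish $|G_p| = 2^p$. The non-identity elements of $G_p$ are indexed syntactically by the nonempty subsets $\{t_1 < \dots < t_l\}$ of $\{0, 1, \dots, p-1\}$, yielding $\sum_{l=1}^{p} \binom{p}{l} = 2^p - 1$ syntactic $p$-clock sums across $\mathcal{S}_1, \dots, \mathcal{S}_p$. Theorem \ref{msf:theorem_distinct_p_sums_are_not_equal} guarantees that these syntactic expressions represent pairwise distinct elements of $\Omega_2$. I would then rule out that any of them equals $\overline{0}$: for $1 \le l \le p - 1$, Lemma \ref{msf:lem_p_clock_sums_period_p_or_1} forces the sum to have period $p > 1$, hence it is non-constant; for $l = p$, Remark \ref{msf:rem_p_p_clocks_sum} gives the value $\overline{1}$ because $p \equiv 3 \pmod 4$. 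Throwing in $\overline{0}$ then gives exactly $1 + (2^p - 1) = 2^p$ distinct elements.

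Next I would check that $\phi$ is a group homomorphism and is injective. Since the definition $\phi(f) = (f(1), \dots, f(p))$ is coordinate-wise and $\oplus$ on $\Omega_2$ is itself coordinate-wise $+_2$, the identity $\phi(f \oplus g) = \phi(f) +_2 \phi(g)$ is immediate. For injectivity, every element of $G_p$ has period dividing $p$ (by Lemma \ref{msf:lem_p_clock_sums_period_p_or_1} for the $p$-clock sums and trivially for $\overline{0}$), so its values on $\{1, 2, \dots, p\}$ determine it on all of $\mathbb{N}$. Thus $\phi(f) = \phi(g)$ implies $f = g$. An injective homomorphism between finite groups of equal order is an isomorphism, completing the argument.

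The main obstacle, distinctness of $p$-clock sums in $\Omega_2$, has already been overcome in Theorem \ref{msf:theorem_distinct_p_sums_are_not_equal}; the remainder is bookkeeping. The $3 \bmod 4$ hypothesis is essential in two places: once inside that prior theorem, and once in Remark \ref{msf:rem_p_p_clocks_sum} to ensure the length-$p$ sum is $\overline{1}$ rather than $\overline{0}$. If $p \equiv 1 \pmod 4$, the length-$p$ sum would collapse to $\overline{0}$ and the count would fall below $2^p$, so the isomorphism as stated would fail.
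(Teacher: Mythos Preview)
Your proposal is correct and follows essentially the same approach as the paper: both arguments hinge on the map $\phi$ defined just before the theorem and on Theorem~\ref{msf:theorem_distinct_p_sums_are_not_equal} as the source of distinctness. The paper's proof is a one-line appeal to that theorem, whereas you spell out the cardinality count, the homomorphism check, and the injectivity via periodicity; these are exactly the details the paper leaves implicit.
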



\begin{proof}
Theorem \ref{msf:theorem_distinct_p_sums_are_not_equal} implies $\phi$ is a group isomorphism. 
\end{proof}


\begin{table}[h]  

\centering
 
\caption{ The 5-clocks projected into $\Omega_2$  } 

\label{msf:tab_5_clocks_in_omega_2}

\smallskip 

 \begin{tabular}{ c c c c c c c c c c c c c c c}
   \hline
 Time & $[5, 0]$ & $[5, 1]$ & $[5, 2]$ & $[5, 3]$ & $[5, 4]$ & $[5, 0] \oplus [5, 1]$ & $[5, 2] \oplus [5, 3] \oplus [5, 4]$ \\
\hline     
      0  &  0  &  1  &  0  &  1  &  0  &  1  & 1  \\
      1  &  1  &  0  &  1  &  0  &  0  &  1  & 1  \\
      2  &  0  &  1  &  0  &  0  &  1  &  1  & 1  \\
      3  &  1  &  0  &  0  &  1  &  0  &  1  & 1  \\
      4  &  0  &  0  &  1  &  0  &  1  &  0  & 0  \\
      5  &  0  &  1  &  0  &  1  &  0  &  1  & 1  \\
      $\dots$ \\ 
\end{tabular}
\end{table}

Theorem \ref{msf:theorem_distinct_p_sums_are_not_equal}  
does not hold when $p$ is a $1 \bmod 4$ prime.  
Table \ref{msf:tab_5_clocks_in_omega_2} shows  $[5, 0] \oplus [5, 1]$ equals
$[5, 2] \oplus [5, 3] \oplus [5, 4]$ in $\Omega_2$.

\begin{thm}\label{msf:theorem_distinct_p_sums_1_mod_4}  \hskip 1pc 
For any $1 \bmod 4$ prime $p$, 
if two $p$-clock sums are distinct and their respective lengths $L$ and $M$ are both $\le \frac{p-1}{2}$, 
then these two $p$-clock sums  are not equal in $\Omega_2$.  
\end{thm}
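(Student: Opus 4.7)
The plan is to mimic the strategy of Theorem \ref{msf:theorem_distinct_p_sums_are_not_equal} but to extract the contradiction from Lemma \ref{msf:lem_p_clock_sums_period_p_or_1} rather than from the parity of the full clock sum, since for a $1 \bmod 4$ prime Remark \ref{msf:rem_p_p_clocks_sum} gives $[p,0]\oplus[p,1]\oplus\cdots\oplus[p,p-1]=\overline{0}$, which would not produce a contradiction in the $3 \bmod 4$-style argument.

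First, I would argue by reductio absurdum: assume two distinct $p$-clock sums $[p,s_1]\oplus\cdots\oplus[p,s_l]$ and $[p,t_1]\oplus\cdots\oplus[p,t_m]$, each of length $\le \frac{p-1}{2}$, are equal in $\Omega_2$. Using that $(\Omega_2,\oplus)$ is abelian and Remark \ref{msf:rem_identity_Omega_2}, I would cancel each clock $[p,r]$ appearing on both sides by applying $\oplus[p,r]$ to both sides. This reduces the equation to $[p,s_1]\oplus\cdots\oplus[p,s_L]=[p,t_1]\oplus\cdots\oplus[p,t_M]$ with $\{s_1,\dots,s_L\}\cap\{t_1,\dots,t_M\}=\emptyset$, and with $L\le l$ and $M\le m$. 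Adding the left-hand side to both sides then yields the single $p$-clock sum
\begin{equation}
[p,s_1]\oplus\cdots\oplus[p,s_L]\oplus[p,t_1]\oplus\cdots\oplus[p,t_M]=\overline{0},
\end{equation}
which is a non-repeating $p$-clock sum of length $L+M$ because the $s_i$ and $t_j$ lists are disjoint.

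The key inequality is $L+M\le p-1$: since $L\le l\le\frac{p-1}{2}$ and $M\le m\le\frac{p-1}{2}$, we obtain $L+M\le p-1<p$. I would also observe that $L+M\ge 1$, since $L=M=0$ would force the two original sums to consist of exactly the same prime clocks, contradicting distinctness. So the combined sum is a $p$-clock sum of length strictly between $0$ and $p$, and Lemma \ref{msf:lem_p_clock_sums_period_p_or_1} then tells us it has period $p$. But $\overline{0}$ has period $1$, and $p\ge 5$ (since $p$ is a $1 \bmod 4$ prime) is not equal to $1$, giving the required contradiction.

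The main obstacle is making sure the bookkeeping after cancellation is tight: specifically, that the hypothesis $l,m\le\frac{p-1}{2}$ is really used to force $L+M\le p-1$ rather than $L+M\le p$. This is exactly the place where the argument of Theorem \ref{msf:theorem_distinct_p_sums_are_not_equal} must diverge from the $3 \bmod 4$ case, because the $5$-clock example $[5,0]\oplus[5,1]=[5,2]\oplus[5,3]\oplus[5,4]$ in Table \ref{msf:tab_5_clocks_in_omega_2} shows that if the combined length equals $p$ then the sum legitimately collapses to $\overline{0}$. Once this length bound is in place, the rest of the proof is a clean two-line appeal to Lemma \ref{msf:lem_p_clock_sums_period_p_or_1}.
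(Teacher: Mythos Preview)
Your proposal is correct and follows essentially the same approach as the paper: reduce to a non-repeating $p$-clock sum equal to $\overline{0}$, then use the length bound $L+M\le p-1$ together with Lemma \ref{msf:lem_p_clock_sums_period_p_or_1} to force period $p$, contradicting the period $1$ of $\overline{0}$. If anything, your bookkeeping is tidier than the paper's---you distinguish the pre- and post-cancellation lengths and explicitly note $L+M\ge 1$, both of which the paper glosses over.
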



\begin{proof}
The proof is almost the same as the proof in theorem \ref{msf:theorem_distinct_p_sums_are_not_equal}.  
The conditions $L \le \frac{p-1}{2}$ and $M \le \frac{p-1}{2}$ and the 
reduction $[p, s_1]  \oplus \dots  \oplus [p, s_{L}] \oplus [p, t_1]  \oplus \dots  \oplus [p, t_{M}] = \overline{0}$
leads to an immediate contradiction: $L + M \le p-1$ and  $\{s_1, \dots, s_L\} \cap \{t_1, \dots, t_M\} = \emptyset$ 
means  lemma \ref{msf:lem_p_clock_sums_period_p_or_1}
implies $[p, s_1]  \oplus \dots  \oplus [p, s_{L}] \oplus [p, t_1]  \oplus \dots  \oplus [p, t_{M}]$ has period $p$.   
\end{proof}



\begin{rem}\label{msf:rem_1_mod_4_prime_dual}  \hskip 1pc 
Let $p$ be a $1 \bmod 4$ prime. 
Let $f = [p, s_1]  \oplus \dots  \oplus [p, s_{l}]$ for some $1 \le l \le \frac{1}{2}(p-1)$.
Set $T = \{0, 1, \dots, p-1\} - \{s_1, \dots, s_l \}$.  Now $T = \{t_1, \dots t_m \}$, where
$l + m = p$.   Set $g = [p, t_1]  \oplus \dots  \oplus [p, t_{m}]$.
Then $f = g$ in $\Omega_2$.  
\end{rem}


\begin{proof}
Since $p$ is a $1 \bmod 4$ prime, 
$\big{(}f \oplus g \big{)} (0) = {\overset{p-1} {\underset{0} \sum}} k \bmod 2$ 
$= 0$ in $\mathbb{Z}_2$.
When $k > 1$, the sum of the elements of $f \oplus g$ before projecting into $\Omega_2$ is
a permutation of the elements $\{0, 1, \dots, p-1 \}$.  Hence, for all $k > 1$, 
$\big{(}f \oplus g \big{)} (k) = 0$ in $\mathbb{Z}_2$.  This means $g = f^{-1}$.  
Lastly, $f = f^{-1}$ in $\Omega_2$, so $f = g$ in $\Omega_2$.  
\end{proof}


\smallskip

Let $p$ be a $1 \bmod 4$ prime.  
Set $H_{p-1} = {\overset{ \frac{1}{2} (p-1)} {\underset{l=1}\cup}}\mathcal{S}_l$ 
\hskip 0.1pc $\cup$ \hskip 0.1pc $\{ \overline{0} \}$.
Observe that $|H_{p-1}| = {\overset{ \frac{1}{2} (p-1)} {\underset{l=1}\sum}} {p \choose k}$ 
$+$ $1 =$ $2^{p-1}$.  To verify that $(H_{p-1}, \oplus)$ is a subgroup of $(\Omega_2, \oplus)$, let 
$f, g \in H_{p-1}$.  Since $g = g^{-1}$ in $(\Omega_2, \oplus)$, 
it suffices to show that $f \oplus g$ lies in 
$H_{p-1}$.  If $f$ or $g$ equals $\overline{0}$, closure in $(H_{p-1}, \oplus)$ holds.  
Otherwise, $f = [p, s_1]  \oplus \dots  \oplus [p, s_{l}]$ for some $1 \le l \le \frac{1}{2}(p-1)$
and $g = [p, t_1]  \oplus \dots  \oplus [p, t_{m}]$ for some $1 \le m \le \frac{1}{2}(p-1)$.
As mentioned before, the sum $f \oplus g$ may be reduced to 
$[p, s_1]  \oplus \dots  \oplus [p, s_{L}]$ $\oplus$ $[p, t_1]  \oplus \dots  \oplus [p, t_{M}]$,
where $\{s_1, \dots, s_L\} \cap \{t_1, \dots, t_M\} = \emptyset$ and $L + M \le p$.
If $L + M \le \frac{1}{2} (p-1)$, closure in $(H_{p-1}, \oplus)$ holds.  
Otherwise, if $L + M > \frac{1}{2} (p-1)$, remark \ref{msf:rem_1_mod_4_prime_dual} implies that 
there is a $p$-clock sum $h = f \oplus g$, where $h$'s length is $p - (L+M)$ 
and $p - (L+M) \le  \frac{1}{2} (p-1)$.

\smallskip

Similar to the group isomorphism $\phi$, define $\psi: H_{p-1} \rightarrow B_{p-1}$ such that 
$\psi(\overline{0}) = 0 \dots 0$ $\in B_p$.  For each $p$-clock sum in $\mathcal{S}_l$, 
where $1 \le l \le \frac{1}{2} (p-1)$, define  
$\psi( [p, t_1] \oplus [p, t_2] \oplus \dots [p, t_l] ) = c_1 \dots c_{p-1}$ where  
$c_i = \big{(} [p, t_1] \oplus [p, t_2] \oplus \dots [p, t_l] \big{)} (i)$. %
It is straightforward to verify that $\psi$ is a group isomorphism onto $B_{p-1}$. 
The group isomorphism $\psi: H_{p-1} \rightarrow B_{p-1}$ leads to the following theorem.

\begin{thm}\label{msf:theorem_1_mod_4_prime}  \hskip 1pc 
Let $p$ be a $1 \bmod 4$ prime.  The subgroup 
$H_{p-1}$ of $\Omega_2$, generated by the $p$-clocks 
$[p, 0], [p, 1], \dots [p, p-1]$  
has order $2^{p-1}$ and is isomorphic to $(B_{p-1}, +_2)$. 
\end{thm}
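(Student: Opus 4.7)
The plan is to verify that the map $\psi: H_{p-1} \to B_{p-1}$ defined just before the theorem is an injective group homomorphism. The preceding discussion already establishes that $H_{p-1}$ is a subgroup of $\Omega_2$ and that $|H_{p-1}| = 2^{p-1} = |B_{p-1}|$, via the binomial identity $\sum_{l=0}^{(p-1)/2} \binom{p}{l} = 2^{p-1}$ and Theorem \ref{msf:theorem_distinct_p_sums_1_mod_4} (which ensures that distinct sorted index sets of length at most $(p-1)/2$ yield distinct elements of $\Omega_2$, so $\psi$ is well-defined). Hence once $\psi$ is shown to be an injective homomorphism, it is automatically bijective. The homomorphism property is immediate: for $f, g \in H_{p-1}$ and $1 \le i \le p-1$, the $i$-th coordinate of $\psi(f \oplus g)$ is $(f \oplus g)(i) = f(i) +_2 g(i)$, matching the $i$-th coordinate of $\psi(f) +_2 \psi(g)$.

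For injectivity I would show that the kernel is trivial. Suppose $h \in H_{p-1}$ satisfies $h(i) = 0$ for all $1 \le i \le p-1$, but $h \ne \overline{0}$. Then $h = [p, u_1] \oplus \cdots \oplus [p, u_L]$ is a non-repeating $p$-clock sum with $1 \le L \le (p-1)/2$, and Lemma \ref{msf:lem_p_clock_sums_period_p_or_1} forces $h$ to have period exactly $p$. Because $h$ is not constant, we must have $h(0) = 1$, and therefore the integer sum $\sum_{m=0}^{p-1} h(m)$ equals $1$.

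The main obstacle is to derive a contradiction from the hypothesis $p \equiv 1 \pmod 4$. For each fixed $j$, the map $m \mapsto (u_j + m) \bmod p$ is a bijection of $\{0, 1, \ldots, p-1\}$, so
\begin{equation}
\sum_{m=0}^{p-1} h(m) \;\equiv\; \sum_{j=1}^{L} \sum_{m=0}^{p-1} \big((u_j + m) \bmod p\big) \;=\; L \cdot \frac{p(p-1)}{2} \pmod 2.
\end{equation}
Since $p \equiv 1 \pmod 4$, the factor $(p-1)/2$ is even, so the right side is even, contradicting the odd value $1$ obtained above. This is precisely where the $1 \bmod 4$ hypothesis is essential: for a $3 \bmod 4$ prime the same identity would instead produce parity $L$, consistent with Remark \ref{msf:rem_p_p_clocks_sum}. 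Thus the kernel of $\psi$ is trivial, injectivity holds, and the equality of cardinalities then yields the claimed isomorphism and order.
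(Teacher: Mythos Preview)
Your proof is correct and follows the same overall architecture as the paper: both establish the result via the map $\psi:H_{p-1}\to B_{p-1}$ defined in the text, relying on the preceding computation $|H_{p-1}|=2^{p-1}$ (which in turn uses Theorem~\ref{msf:theorem_distinct_p_sums_1_mod_4}) and the evident homomorphism property of $\psi$.

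The one place where you go beyond the paper is the injectivity of $\psi$. The paper simply asserts that ``it is straightforward to verify that $\psi$ is a group isomorphism,'' whereas you supply an actual argument: a nonzero kernel element $h$ would have period $p$, vanish at $1,\dots,p-1$, and hence satisfy $\sum_{m=0}^{p-1}h(m)=1$; but summing $\sum_{j}\sum_{m}((u_j+m)\bmod p)=L\cdot\tfrac{p(p-1)}{2}$ and using $\tfrac{p-1}{2}$ even (since $p\equiv 1\pmod 4$) forces this sum to be even, a contradiction. This parity computation is exactly the content of Remark~\ref{msf:rem_p_p_clocks_sum} unwound, so your argument is the natural way to make the paper's ``straightforward'' claim precise rather than a genuinely different route.
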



\begin{thm}\label{msf:theorem_fundamental_prime_clock_sum_boolean}
For positive integer $n$ and any function $f: \{0, 1\}^n$ $\rightarrow \{0, 1\}$, 
there exists a finite sum of prime clocks in $\Omega_2$ that can compute $f$.
\end{thm}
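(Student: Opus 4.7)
\medskip

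\noindent\textbf{Proof proposal.} The plan is to reduce the problem to an application of Theorem~\ref{msf:theorem_3_mod_4_prime}. First, encode the domain: identify each bit string $\mathbf{x} = x_0 x_1 \dots x_{n-1} \in \{0,1\}^n$ with the natural number
\[
m(\mathbf{x}) \;=\; x_0 + 2 x_1 + 4 x_2 + \dots + 2^{n-1} x_{n-1} \;\in\; \{0, 1, \dots, 2^n - 1\},
\]
in the same fashion used in the discussion accompanying Figure~\ref{msf:fig_gate_circuit}. Then ``$g$ computes $f$'' will mean $g(m(\mathbf{x})) = f(\mathbf{x})$ for every $\mathbf{x} \in \{0,1\}^n$.

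Next, I would select a prime $p$ that is large enough and of the right residue class. Specifically, pick a $3 \bmod 4$ prime $p \ge 2^n$. Such a prime exists because there are infinitely many $3 \bmod 4$ primes (an elementary classical fact, or a special case of Dirichlet's theorem on primes in arithmetic progressions), so in particular some exceeds $2^n$. Having fixed $p$, I would bring in Theorem~\ref{msf:theorem_3_mod_4_prime}, which asserts that the subgroup $G_p \le \Omega_2$ generated by the $p$-clocks $[p,0], [p,1], \dots, [p,p-1]$ is isomorphic to $(B_p, +_2)$ via $\phi$, where $\phi(g) = (g(1), g(2), \dots, g(p))$.

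The heart of the argument is then a counting/coordinate-matching step. Every $g \in G_p$ has period dividing $p$ (by Lemma~\ref{msf:lem_p_clock_sums_period_p_or_1}), so $g(p) = g(0)$; hence specifying the tuple $(g(1), \dots, g(p)) \in B_p$ is equivalent to specifying the tuple $(g(0), g(1), \dots, g(p-1))$. Define the target bit string $b = b_0 b_1 \dots b_{p-1} \in \{0,1\}^p$ by
\[
b_m \;=\; \begin{cases} f(\mathbf{x}) & \text{if } 0 \le m \le 2^n - 1 \text{ and } m = m(\mathbf{x}), \\ 0 & \text{if } 2^n \le m \le p-1. \end{cases}
\]
Since $\phi$ is a bijection, pulling $b$ (rearranged into the order $(b_1, b_2, \dots, b_{p-1}, b_0) \in B_p$) back through $\phi^{-1}$ produces a unique $g \in G_p$ with $g(m) = b_m$ for every $m \in \{0, 1, \dots, p-1\}$. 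By construction $g$ is a finite sum of prime clocks in $\Omega_2$, and for each $\mathbf{x} \in \{0,1\}^n$ we have $m(\mathbf{x}) < 2^n \le p$, so $g(m(\mathbf{x})) = b_{m(\mathbf{x})} = f(\mathbf{x})$, as required.

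The only nontrivial ingredient is Theorem~\ref{msf:theorem_3_mod_4_prime} itself (already proved) and the existence of arbitrarily large $3 \bmod 4$ primes; after that, the argument is bookkeeping. The main potential obstacle is conceptual rather than technical: one must be careful that the isomorphism $\phi$ reaches \emph{every} tuple in $B_p$ (so that arbitrary boundary values at positions $2^n, \dots, p-1$ cause no obstruction), and that the periodicity identification $g(p) = g(0)$ correctly translates the codomain of $\phi$ into the values of $g$ on $\{0, 1, \dots, p-1\}$. Once those two points are checked, the proof is immediate.
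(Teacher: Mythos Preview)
Your proposal is correct and follows essentially the same approach as the paper: choose a prime $p$ exceeding $2^n$ and invoke the isomorphism of the $p$-clock subgroup with $B_p$ (resp.\ $B_{p-1}$) to realise an arbitrary truth table. The only difference is cosmetic: you insist on a $3\bmod 4$ prime (using the infinitude of such primes) and apply Theorem~\ref{msf:theorem_3_mod_4_prime} alone, whereas the paper takes any prime $p>2^n$ via Euclid and then branches to Theorem~\ref{msf:theorem_3_mod_4_prime} or Theorem~\ref{msf:theorem_1_mod_4_prime} according to the residue class of $p$; your version trades the case split for a marginally stronger (but still elementary) existence input, and spells out the bookkeeping that the paper leaves implicit.
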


\begin{proof}
Euclid's second theorem implies there is a prime $p > 2^n$, where    
$p$ is a $3 \bmod 4$ or $1 \bmod 4$ prime.  Hence,   
theorem \ref{msf:theorem_3_mod_4_prime} or \ref{msf:theorem_1_mod_4_prime} completes the proof.     
\end{proof}

\smallskip

\noindent  Furthermore, finding a finite prime clock sum 
that computes $f$ is Turing computable       
and there are efficient Turing computable algorithms that 
can decide whether a natural number $n$ is prime \cite{msf:saxena}.


\section{ Prime Clock Sums Compute Boolean Functions in $\Omega_2$ }\label{msf:sect_prime_clock_sums_algorithm}

Let $F_n$ denote the set of all Boolean functions in $n$ variables.  
Formally, the set $F_n =$ $\big{ \{ } f$ $|$  $f :\{0, 1\}^n \rightarrow \{0, 1\}$ $\big{ \} }$ and 
$F_n$ contains  $2^{2^n}$ distinct functions.  For prime clock sums, 
it is convenient to think of $f \in F_n$ as a 
binary string of length $2^n$, called the {\it truth-table} of $f$.   
Table \ref{msf:tab_binary_booleans} 
shows all 16 Boolean functions in $F_2$, their truth tables and 
corresponding prime clock sums that compute each function.

\begin{table}[h]  
 
\centering

 \caption{ \hskip 0.5pc  $f_k: \{0, 1\}^2    \rightarrow \{0, 1\}$.    }   
 \label{msf:tab_binary_booleans}
 
\smallskip 


\begin{tabular}{p{6.5cm}c l p{0.5cm} l p{3.0cm} }
 \hline
 Boolean Function & & Truth Table &  & Prime Clock Sum \\
  \hline
  $f_1(x, y) = 1$          & &   $1111$    & &  $[2, 0] \oplus [2, 1]$  \\
  $f_2(x, y) =  0$         & &   $0000$    & &  $[2, 0] \oplus [2, 0]$  \\
   $f_3(x, y) =  x$        & &   $0011$    & &   $[2, 1] \oplus [3, 1]$       \\
   $f_4(x, y) =  y$        & &   $0101$    & &   $[2, 0]$       \\
   $f_5(x, y) = \neg x$    & &   $1100$    & &   $[2, 0] \oplus [3, 1]$       \\
    $f_6(x, y) = \neg y$   & &   $1010$    & &   $[2, 1]$   \\    
    $f_7(x, y) =  x \wedge y$    & &  $0001$  & &  $[2, 0] \oplus [3, 0]$   \\
    $f_8(x, y) =  x \vee y$      & &   $0111$     & &   $[2, 0] \oplus [3, 2]$   \\
    $f_9(x, y) = \neg x \vee y$  & &   $1101$      & &   $[3, 0] \oplus [3, 1]$     \\
    $f_{10}(x, y) =  x \vee \neg y$ & &  $1011$   & &  $[3, 1] \oplus [3, 2]$     \\
    $f_{11}(x, y) =  (x \wedge y) \vee \neg (x \vee y)$ & &  $1001$  & &  $[3, 1]$      \\
    $f_{12}(x, y) =  (x \vee y) \wedge \neg (x \wedge y)$  & &   $0110$  & &  $[3, 0] \oplus [3, 2]$     \\
    $f_{13}(x, y) =  \neg (x \vee y)$   & &   $1000$   & &  $[2, 1] \oplus [3, 2]$  \\
    $f_{14}(x, y) = \neg (x \wedge y)$  & &   $1110$   & &  $[2, 1] \oplus [3, 0]$   \\
    $f_{15}(x, y) = \neg x \wedge y$    & &   $0100$   & &  $[3, 0]$    \\     
    $f_{16}(x, y) = x \wedge \neg y$    & &   $0010$   & &  $[3, 2]$    \\    
   \end{tabular}  

   \bigskip 

   The truth table for $\{0, 1\}^2$ is ordered as $\{00, 01, 10, 11\}$. 
  
      

\end{table}

Consider $[p, s] \oplus [q, t]$ in $\Omega_2$.  
The {\it first} $2^n$ {\it elements}  of  $[p, s] \oplus [q, t]$ refer to the bit 
string $\big{(} [p, s] \oplus [q, t] \big{)}(0)$, 
\hskip 0.3pc $\big{(} [p, s] \oplus [q, t] \big{)} (1)$, \hskip 0.3pc $\dots$, 
\hskip 0.3pc $\big{(} [p, s] \oplus [q, t] \big{)} (2^n - 1)$ of length $2^n$. 
The first $2^n$ elements of  $[p, s] \oplus [q, t]$ 
represent a Boolean function $f \in F_n$. 
In the general case, if $q_1, \dots, q_L$ are primes, the first 
$2^n$ elements of $[q_1, t_1] \oplus [q_2, t_2] \oplus \dots \oplus [q_L, t_L]$
also represent a Boolean function $f_n \in F_n$.  Consider the first $2^n$ elements of prime clock sum  
$[q_1, t_1] \oplus [q_2, t_2] \oplus \dots \oplus [q_L, t_L]$.   
Algorithm \ref{msf:alg_prime_clock_sum_boolean} computes the $i$th element of this truth table in $F_n$.

\begin{algorithm}\label{msf:alg_prime_clock_sum_boolean}  \hskip 1pc A Prime Clock Sum in $\Omega_2$ Computes a Boolean Function 

\smallskip 

\noindent  {\bf INPUT:} \hskip 1pc $i$

\smallskip 

\noindent  \verb|  set| \hskip 0.3pc $r_1 = (t_1 + i) \bmod q_1$   

\noindent  \verb|  set| \hskip 0.3pc $r_2 = (t_2 + i) \bmod q_2$ 

\noindent  \verb|  . . .|

\noindent  \verb|  set| \hskip 0.3pc $r_L = (t_L + i) \bmod q_L$ 

\noindent  \verb|  set | $y = (r_1 + r_2 + \dots + r_L) \bmod 2$

\medskip 

\noindent  {\bf OUTPUT:} \hskip 1pc $y$

\smallskip 


\end{algorithm}


\begin{example}
We demonstrate 2-bit multiplication with prime clock sums, computed with algorithm 
\ref{msf:alg_prime_clock_sum_boolean}. 
In table \ref{msf:tab_2_bit_multiply}, for each 
$u \in \{0, 1\}^2$ and each $l \in \{0, 1\}^2$,  the product  $u * l$ is shown in each row, 
whose 4 columns are labelled by $\mathcal{M}_3$, $\mathcal{M}_2$, $\mathcal{M}_1$ and $\mathcal{M}_0$.  
With input $i$ of 4 bits (i.e., $u$ concatenated with $l$), the output of the 2-bit 
multiplication is a 4-bit string 
$\mathcal{M}_3(i)$ $\mathcal{M}_2(i)$ $\mathcal{M}_1(i)$  $\mathcal{M}_0(i)$, 
shown in each row of table \ref{msf:tab_2_bit_multiply}.

One can verify that, according to algorithm \ref{msf:alg_prime_clock_sum_boolean}, 
prime clock sum $[2, 0] \oplus [7, 3] \oplus  [7, 4] \oplus  [7, 5] \oplus [11, 10]$   
computes function $\mathcal{M}_0: \{0, 1\}^2 \times \{0, 1\}^2 \rightarrow \{0, 1\}$.   
Similarly, $[2, 0] \oplus  [2, 1] \oplus [3, 0] \oplus [5, 2] \oplus  [11, 0] \oplus [11, 1]$ 
computes function $\mathcal{M}_1$.
Prime clock sum $[5, 0]  \oplus [7, 0] \oplus  [7, 2]  \oplus [11, 4]$ computes function $\mathcal{M}_2$.  
Lastly, $[2, 1] \oplus [5, 0]  \oplus  [11, 1] \oplus [11, 6]$ computes function $\mathcal{M}_3$.

\end{example}

\begin{table}[h]

\centering

\caption{ 2-Bit Multiplication.     }

Multiplication functions $\mathcal{M}_i: \{0, 1\}^4 \rightarrow \{0, 1\}$.  

\medskip

\label{msf:tab_2_bit_multiply}

\begin{tabular}{  l  c   c  c  c  c  c  c  c  c  c  c   c  c  }
 \hline
 $u$ & $l$  &  \verb|  | & $\mathcal{M}_3$ & \verb| | & $\mathcal{M}_2$ & \verb| | & $\mathcal{M}_1$ & \verb| | & $\mathcal{M}_0$  \\
 \hline     
00  & 00  & & 0 & & 0 & & 0 & & 0 \\ 
00  & 01  & & 0 & & 0 & & 0 & & 0 \\ 
00  & 10  & & 0 & & 0 & & 0 & & 0 \\ 
00  & 11  & & 0 & & 0 & & 0 & & 0 \\ 
01  & 00  & & 0 & & 0 & & 0 & & 0 \\ 
01  & 01  & & 0 & & 0 & & 0 & & 1 \\ 
01  & 10  & & 0 & & 0 & & 1 & & 0 \\ 
01  & 11  & & 0 & & 0 & & 1 & & 1 \\ 
 \\ 
10  & 00  & & 0 & & 0 & & 0 & & 0 \\ 
10  & 01  & & 0 & & 0 & & 1 & & 0 \\ 
10  & 10  & & 0 & & 1 & & 0 & & 0 \\ 
10  & 11  & & 0 & & 1 & & 1 & & 0 \\ 
11  & 00  & & 0 & & 0 & & 0 & & 0 \\ 
11  & 01  & & 0 & & 0 & & 1 & & 1 \\ 
11  & 10  & & 0 & & 1 & & 1 & & 0 \\ 
11  & 11  & & 1 & & 0 & & 0 & & 1 \\ 
\end{tabular}

\end{table}


 The $i$th element of 
$\big{(}[q_1, t_1] \oplus [q_2, t_2] \oplus \dots \oplus [q_L, t_L]\big{)}$'s 
truth table is stored in the variable $y$ when algorithm \ref{msf:alg_prime_clock_sum_boolean} halts.  
Algorithm \ref{msf:alg_prime_clock_sum_boolean} is presented in a serial form.  Nevertheless, the computation 
of the $L$ instructions \verb| set| \hskip 0.3pc $r_k =  (t_k + i) \bmod q_k$, where $1 \le k \le L$,    
can be computed in parallel when there is a separate physical device for each of these $L$ prime 
clocks $[q_1, t_1]$, $[q_2, t_2]$  $\dots$ $[q_L, t_L]$.  
Subsequently, the parity of $y$ can be determined in a second computational 
step that executes a parallel add of  $r_1 + r_2 + \dots + r_L$, followed by 
setting $y$ to the least significant bit of the sum   $r_1 + r_2 + \dots + r_L$.  

\smallskip  

As an alternative implementation of algorithm \ref{msf:alg_prime_clock_sum_boolean},
when there is a more suitable physical device for prime clocks, the $k$th clock can compute the $k$th 
bit $b_k =$ $\big{(} (t_k + i) \bmod q_k \big{)} \bmod 2$ and then a parallel exclusive-or \cite{msf:reif} 
can be applied to the $L$ bits $b_1$, $b_2$, $\dots$, $b_L$.  
In contrast, a gate-based Boolean circuit requires at least $d$ computational steps where $d$ is 
the depth of the circuit.  



\bibliographystyle{abbrv}




\end{document}